\pgfplotsset{compat=newest}
\pgfplotsset{plot coordinates/math parser=false}
\newlength\figureheight
\newlength\figurewidth
\def\BState{\State\hskip-\ALG@thistlm}
\pgfplotsset{compat=newest}
\pgfplotsset{plot coordinates/math parser=false}
\newcommand{\range}{\operatorname{range}}
\newcommand{\thetatilde}{\tilde{\theta}}
\newcommand{\R}{\mathbb{R}}
\newcommand{\C}{\mathcal{C}}
\newcommand{\nats}{\mathbb{N}}
\newcommand{\dom}{\operatorname{dom}}
\newcommand{\Id}{{I}}
\newcommand*{\tr}{^{\mkern-1.5mu\mathsf{T}}}
\newcommand{\minimize}{\operatorname{minimize}}
\newcommand{\Tr}{\operatorname{trace}}
\newcommand{\He}{\operatorname{He}}
\newcommand{\Spn}{\mathbb{S}^n_+}
\newcommand{\Sspnz}{\mathbb{S}^{n_z}_{+}}
\newcommand{\Spnz}{\mathbb{S}^{n_z}_{+}}
\newcommand{\Spny}{\mathbb{S}^{n_y}_{+}}
\newcommand{\0}{{0}}
\newcommand{\ep}{{\varepsilon}}
\newtheorem{theorem}{Theorem}
\newtheorem{definition}{Definition}
\newtheorem{assumption}{Assumption}
\newtheorem{proposition}{Proposition}
\newtheorem{remark}{Remark}
\newcommand{\source}{{THIS IS A PREPRINT VERSION. IF YOU FOUND THIS READING USEFUL FOR YOUR RESEARCH PLEASE CITE THE PUBLISHED VERSION DOI: \href{https://doi.org/10.1109/LCSYS.2021.3081345}{https://doi.org/10.1109/LCSYS.2021.3081345}}}
\def\ps@IEEEtitlepagestyle{}
\title{\LARGE \bf Observer Design for Linear Aperiodic Sampled-Data Systems: A Hybrid Systems Approach}
\author{Francesco~Ferrante, \IEEEmembership{Member, IEEE} and Alexandre Seuret \thanks{Francesco Ferrante is with Univ. Grenoble Alpes, CNRS,  GIPSA-lab, F-38000 Grenoble, France. Email: francesco.ferrante@gipsa-lab.fr} 
\thanks{Alexandre Seuret is with LAAS-CNRS, Universit\'e de Toulouse, CNRS, Toulouse, France. Email: aseuret@laas.fr}
\thanks{Research by Francesco Ferrante is partially funded by ANR via project HANDY, number ANR-18-CE40-0010.}
}
\begin{document}
\maketitle\begin{abstract}
Observer design for linear systems with aperiodic sampled-data measurements is addressed. To solve this problem, a novel hybrid observer is designed. The main peculiarity of the proposed observer consists of the use of two output injection terms, one acting at the sampling instants and one providing an intersample injection. The error dynamics are augmented with a timer variable triggering the arrival of a new measurement and analyzed via hybrid system tools. Using Lyapunov theory, sufficient conditions for the convergence of the observer are provided. Relying on those conditions, an optimal LMI-based design is proposed for the observer gains. The effectiveness of the approach is illustrated in an example.
\end{abstract}
\begin{IEEEkeywords}
Hybrid systems, sampled-data systems, LMIs, Observer Design.
\end{IEEEkeywords}
\section{Introduction}
\subsection{Motivation}
\IEEEPARstart{S}{tate} estimation is a fundamental problem in systems and control theory. Indeed, since state variables can be difficult or impossible to measure, having access to reliable estimates of the plant state is paramount for fault detection, monitoring, and control.
The pervasive use of data networks in modern control systems has led to several major difficulties in the design of reliable observers for networked systems. Indeed, when the plant output is accessed through a data network, the typical assumption of continuously or periodically measuring is unrealistic; see, e.g., \cite{hespanha2007survey,hristu2005handbook,Walsh2002}, {\cite{Feketa:2013:impulsive} and
\cite{hetel2017recent} for a recent survey on aperiodic sampled-data systems. In this paper, we are interested in the design of state observers in the presence of sporadically available measurements. The fact that measurements are available only at some aperiodic isolated times requires the use of observer schemes that are able to handle this intermittent stream of information to generate suitable innovation terms. This naturally leads to the use of hybrid observers, whose dynamics evolve continuously when no measurements are available and experience instantaneous changes when a new measurement gets available.   
\subsection{Problem Formulation}
\label{sec:ProblemStatement}
In this paper, we consider the problem of estimating the state of a continuous-time linear time-invariant plant in the presence of intermittent measurements. In particular, we consider a plant of the form:
\begin{equation}
\label{eq:P2:Chap3:Plant}
\left\{
\begin{array}{lcl}
\dot{z}&=&Az\\
y&=&Cz
\end{array}
\right.
\end{equation}
where $z\in\mathbb{R}^{n_z}$ is the plant state and $y\in\mathbb{R}^{n_y}$ is the plant output, with $n_z,n_y\in\mathbb{N}_{>0}$. Matrices $A$ and $C$ are known, constant and of appropriate dimensions. The plant output $y$ is assumed to be available only at some time instants $t_k$, $k\in\mathbb{N}_{>0}$, not known {\itshape a priori}. We assume that the sequence $\{t_k\}_{k\in\mathbb{N}_{>0}}\!$ is unbounded, in addition we suppose that there exist two positive real scalars $T_1\leq T_2$  such that  
\begin{equation}
\begin{array}{lr}
\label{eq:P2:Chap3:timebound}
0\leq t_{1}\leq T_2,\quad T_1\leq t_{k+1}-t_{k}\leq T_2,\quad\forall k\in\mathbb{N}.
\end{array}
\end{equation}
The lower bound in condition \eqref{eq:P2:Chap3:timebound} prevents the existence of accumulation points in the sequence $\{t_k\}_{k\in\mathbb{N}_{>0}}$, and, hence, avoids the existence of Zeno behaviors, which are typically undesired in practice.
In fact, $T_1$ defines a strictly positive minimum time in between consecutive measurements. Furthermore, $T_2$ defines the \emph{Maximum Allowable Transfer Time (MATI)} \cite{postoyan2012framework}.
\subsection{Related work}
The design of the observers in the presence of sporadic measurements has been largely studied by researchers over the last two decades and several observer design strategies have been proposed in the literature. Such strategies essentially belong to two main families. The first one pertains to observers whose state is entirely reset whenever a new measurement is available and that run in open-loop in between such events  \cite{Andrieu:2013aa,etienne2017observer,Ferrante2016state}, \cite{Mazenc:2014:construction}, \cite{nadri2003design,sferlazza2018time}, i.e., continuous-discrete observers:
\begin{equation}
\label{eq:CT_OBS}
\left\{\!\!\!\!\!\begin{array}{ll}
\begin{array}{rcl}
\dot{\hat{z}}(t)&\! =& \!\!\!A\hat{z}(t)\\
\end{array} &\hspace{-0.1cm}\mbox{if}\,\, t\neq t_k, k\in\nats_{>0},\\
\begin{array}{rcl}
\hat{z}(t^{+}) &\!\!\!\! \!= &\!\!\!\hat{z}(t)\!+\!F(y(t)\!-\!C\hat{z}(t))\\
\end{array}&
\hspace{-0.1cm}\mbox{if}\,\, t=t_k, k\in\nats_{>0},
\end{array}\right.
\end{equation}
where $F$ is a gain to be designed, which can be potentially selected to be dependent on the time elapsed in between measurements; see \cite{Andrieu:2013aa,etienne2017observer,sferlazza2018time}. The working principle of the above observer is as follows, when no plant measurement is available, the observer behaves as a copy of the plant. When a new measurement gets available, the observer state is instantaneously reset. The main advantage of this class of observers is that it allows to achieve fast convergence rate. On the other hand, fast convergence rate typically comes at the price of enforcing large changes of the observer state at the sampling times. This can be unsuitable when employing an observer-based control. Indeed, large jumps in the estimate may lead to overly large discontinuities in the control input, which can jeopardize the safety of the actuator. In addition, the fact that the observer runs in ``open-loop'' may lead to poor inter-sample behaviors. 

A completely different paradigm has been proposed by  Karafyllis and Kravaris in \cite{karafyllis2009continuous}. In \cite{karafyllis2009continuous}, the  proposed architecture is composed by 
a so-called \emph{output predictor} whose state is reset to the value of the plant output at the sampling times and used as an inter-sample injection to feed a Luenberger-like observer. A similar approach for control design is presented in
 \cite{Ahmed-AliExponential2016}.
This idea has been later generalized in \cite{ferrante2018TAC}. The main advantage of this class of observers is that they avoid the occurrence of jumps in the estimate. Moreover, the above mentioned intersample injection can be tuned to conveniently shape the transient response. However, this class of observers typically exhibit less aggressive transient performance when compared to the scheme in \eqref{eq:CT_OBS}. \subsection{Outline of the Proposed Solution}
With the objective of achieving a tradeoff between convergence speed and transient performance, while avoiding overly large jumps in the plant estimate, in this paper we blend the architecture \eqref{eq:CT_OBS} with that in \cite{ferrante2018TAC} and propose a  
new class of hybrid observers for aperiodic sampled-data systems. In particular, we consider the following hybrid observer:
\begin{equation}
\label{eq:P2:Chap3:ObsSampleHold}
\begin{array}{ll}
\left\{
\begin{array}{rcl}
\dot{\hat{z}}(t)& =& A\hat{z}(t)+L\theta(t)\\
\dot{\theta}(t)& =&H\theta(t)
\end{array}\right. &\hspace{-0.2cm}\text{if}\,\, t\neq t_k, k\in\mathbb{N}_{>0},\\
\left\{
\begin{array}{rcl}
\hat{z}(t^{+}) &\!\!\!\! = &\hat{z}(t)+F(y(t)-C\hat{z}(t))\\
\theta(t^{+})&\!\!\!\!= &(I-CF)(y(t)-C\hat{z}(t))
\end{array}\!\!\!\! \right.&\hspace{-0.2cm}
\text{if}\,\, t=t_k,k\in\mathbb{N}_{>0},
\end{array}
\end{equation}
where the observer gains $L$, $F$, and $H$ are real matrices of appropriate dimensions to be designed. Variable $\hat{z}$ represents the estimate of $z$ provided by the observer.
The observer in \eqref{eq:P2:Chap3:ObsSampleHold} generalizes several existing architectures for state estimation in the presence of sampled-data aperiodic measurements. In particular, selecting $H=0$ and $F=0$ leads to classical sampled-data observers with zero-order hold output injection \cite{montestruque2002state,seuret2006networked}. If only $F$ is set to zero, the resulting observer reduces to the observer presented in \cite{ferrante2018TAC}. If $H$ and $L$ are both set equal to zero, one recovers \eqref{eq:CT_OBS}.
\subsection{Contribution and organization}
The main contribution of this paper consists of sufficient conditions for the design of the observer \eqref{eq:P2:Chap3:ObsSampleHold} to ensure global exponential stability of the estimation error with tunable transient performance. Compared to the previous schemes in this area, the observer contains three correction terms to be designed. More precisely, a blend of injections during flows and jumps. 
The paper is organized as follows. Section~\ref{sec:Construction} presents a hybrid model of the error dynamics and a sufficient conditions to exponential stability of the estimation error. The main contributions of the paper are presented in Section~\ref{sec:ObDes}, where computationally affordable conditions for the design of the observer gains are provided. These results are illustrated through an example in Section~\ref{sec:example}.
\subsection{Notation}
The symbol $\nats$ stands ($\nats_{>0}$) for the set of nonnegative (positive) integers, $\R_{\geq 0}$ ($\R_{>0}$) denotes the set of nonnegative (positive) reals, $\R^n$ is the $n$-dimensional Euclidean space, $\R^{n\times m}$ is the set of $n\times m$ real matrices, and $\Spn$ is the set of $n\times n$ symmetric positive definite matrices. The identity matrix is denoted by $\Id$. The symbol $M\tr$ denotes the transpose of the matrix $M$. When $M$ is a square matrix, $\He (M)=M+M\tr$. For a symmetric matrix $M$, $M\succ(\prec)\,0$ and $M\succeq(\preceq)\,0$ indicate that $M$ is positive (negative) definite and positive (negative) semidefinite, respectively. The symbols $\lambda_{\min}(M)$ and  $\lambda_{\max}(M)$ denote, respectively, the largest and the smallest eigenvalue of $M$. In partitioned symmetric matrices, the symbol $\bullet$ stands for symmetric blocks. For $x\in\mathbb{R}^n$, $\vert x \vert$ denotes its Euclidean norm. 
The equivalent notation $(x, y)=[x\tr\,\,y\tr]\tr$ is used for vectors.
Given $x\in\mathbb{R}^{n}$  and $\mathcal{A}\subset \mathbb{R}^{n}$ nonempty, the distance of $x$ to $\mathcal{A}$ is defined as 
$\vert x \vert_{\mathcal{A}}=\inf_{y\in {\mathcal{A}}} \vert x-y \vert$. For any function $z\colon\mathbb{R}\rightarrow\mathbb{R}^n$, we denote $z(t^+)\coloneqq \lim_{s\rightarrow t^+} z(s)$, when it exists.
\subsection{Preliminaries on Hybrid Dynamical Systems}
\label{sec:preliminariesHybridInc}
In this paper we consider hybrid  dynamical systems in the framework \cite{goebel2012hybrid} represented as:
\begin{equation}
\label{eq:HybridPlant}
\mathcal{H}\left\{
\begin{array}{lcll}
\dot{x}&=&f(x),&\quad x\in \mathcal C,\\
x^+&\in&G(x),&\quad x\in  \mathcal D.
\end{array}\right.
\end{equation}
where $x\in\R^{n}$ is the state vector, $f\colon\R^{n}\rightarrow\R^{n}$ denote the \emph{flow map} and $G\colon\R^{n}\rightrightarrows\R^{n}$ the (set valued) \emph{jump map}, while the sets $\mathcal C\subset\R^{n}$ and $\mathcal D\subset\R^{n}$ refer to the \emph{flow} and the \emph{jump sets}, respectively. A set $E\subset\R_{\geq 0}\times \nats$ is a \emph{hybrid time domain} (\emph{HTD}) if it is the union of a finite or infinite sequence of intervals $[t_j, t_{j+1}]\times\{j\}$, with the last interval (if existent) of the form $[t_j,T)$ with $T$ finite or $T=\infty$. A function $\phi\colon\dom\phi\rightarrow\R^n$ is a hybrid arc if $\dom\phi$ is a HTD and $t\mapsto\phi(t, j)$ is locally absolutely continuous for each $j$. Given a hybrid arc $\phi$, 
$\dom_t \phi\coloneqq \{t\in\R_{\geq 0}\colon \exists j\in\nats\,\,\mbox{s.t.}\,\,(t, j)\in\dom \phi\}$ and $\dom_j \phi\coloneqq \{j\in\nats\colon \exists t\in\mathbb{R}_{\geq 0}\,\,\mbox{s.t.}\,\,(t, j)\in\dom \phi\}$.
Given a hybrid arc $\phi$, $s\in\dom_t \phi$, and $i\in\dom_j \phi$, $j(s)\coloneqq\min\{j\in\nats\colon\,\, (s, j)\in\dom \phi\}$ and $t(i)\coloneqq\min\{t\in\R_{\geq 0}\colon\,\, (t, i)\in\dom \phi\}$. A solution to \eqref{eq:HybridPlant} is any hybrid arc that satisfies its dynamics. A solution $\phi$ to $\mathcal{H}$ is maximal if its domain cannot be extended and it is complete if its domain is unbounded. Given a set $\mathcal M$, we denote by $\mathcal{S}_{\mathcal{H}}(\mathcal M)$ the set of all maximal solutions $\phi$ to $\mathcal{H}$ with $\phi(0,0)\in \mathcal M$. If no set $\mathcal{M}$ is mentioned, $\mathcal{S}_{\mathcal{H}}$ is the set of all maximal solutions to $\mathcal{H}$.

The following notion of global exponential stability is considered in the paper.
\begin{definition}(Global exponential stability~\textsc{\cite{teel2013lyapunov}})
\label{GES}
Let $\mathcal{A}\subset\mathbb{R}^{n}$ be closed. The set $\mathcal{A}$ is said to be \emph{globally exponentially stable} (GES) for hybrid system $\mathcal{H}$ if there exist strictly positive real numbers $\lambda, k$ such that every maximal solution $\phi$ to $\mathcal{H}$ is complete and it satisfies for all $(t, j)\in\dom\phi$
\begin{equation}
\label{eq:GESBound}
\vert\phi(t,j)\vert_{\mathcal{A}}\leq ke^{-\lambda(t+j)}\vert\phi(0,0)\vert_{\mathcal{A}}.
\end{equation}
\end{definition}

We invite the reader to check \cite{goebel2012hybrid} for more details on the considered framework for hybrid systems.
\section{Hybrid Modeling and Stability Analysis}
\label{sec:Construction}
\subsection{Hybrid Modeling}
Let us first introduce the following change of variables $$\varepsilon\coloneqq z-\hat{z},\quad 
\tilde{\theta}\coloneqq C(z-\hat{z})-\theta,
$$
which defines, respectively, the estimation error and the difference between the output estimation error and $\theta$. In particular, the dynamics of those estimation errors read:
\begin{equation}
\scalebox{1}{$
\begin{array}{ll}
\left\lbrace
\begin{array}{ll}
\begin{bmatrix}
\dot{\ep}(t)\\
\dot{\tilde{\theta}}(t)
\end{bmatrix}&=\mathsf{F}\begin{bmatrix}
\ep(t)\\
\tilde{\theta}(t)
\end{bmatrix}\end{array}\right. &\text{if}\,\,t\neq t_k, k\in\mathbb{N}\\
\left\lbrace
\begin{array}{ll}
\begin{bmatrix}
\ep(t^+)\\
\tilde{\theta}(t^+)
\end{bmatrix}&=\mathsf{G}\begin{bmatrix}
\ep(t)\\
\tilde{\theta}(t)
\end{bmatrix}
\end{array}\right.&
\text{if}\,\, t=t_k, k\in\mathbb{N}
\end{array}
\label{eq:P2:Chap3:ObsGenErr}$}
\end{equation}
where 
\begin{equation}
\label{eq:P2:Chap3:FG}
\begin{array}{ll}
\mathsf{F}\!\coloneqq\!\!\begin{bmatrix}
A\!-\!LC&\!\!L\\
CA\!-\!CLC\!-\!HC&\!\!CL\!+\!H
\end{bmatrix}\!\!,\ 
\mathsf{G}\!\coloneqq\!\!\begin{bmatrix}
\Id\!-\!FC&\!\!\! 0\\
0&\!\!\! 0
\end{bmatrix}\!\!.
 \end{array}
\end{equation}
The fact that the observer experiences jumps, when a new measurement is available and evolves according to a differential equation in between updates, suggests that the updating process of the error dynamics can be described via a hybrid system.  Hence, we represent the whole system composed by the plant \eqref{eq:P2:Chap3:Plant}, the observer \eqref{eq:P2:Chap3:ObsSampleHold}, and the logic triggering jumps as a hybrid system. The proposed hybrid systems approach also models the hidden time-driven mechanism triggering the jumps of the observer. To this end, and as in \cite{Ferrante2016state,merco2019resiliency,ferrante2018TAC}, we augment the state of the system  with an auxiliary timer variable $\tau$ that keeps track of the duration of flows and triggers a jump whenever a certain condition is verified. This additional state allows to describe the time-driven triggering mechanism as a state-driven triggering mechanism, thereby leading to a model that can be efficiently represented by relying on the framework for hybrid systems in \cite{goebel2012hybrid}.
More precisely, we make $\tau$ decrease as ordinary time $t$ increases and, whenever $\tau=0$, reset it to any point in $[T_1,T_2]$, so as to enforce \eqref{eq:P2:Chap3:timebound}. After each jump, we allow the system to flow again.  
The whole system composed by the states $\ep$ and $\tilde{\theta}$, and the timer variable $\tau$ can be represented by the following hybrid system, which we denote by $\mathcal{H}_e$, with state
$x=(\ep,\tilde{\theta},\tau)\in\R^{n_x}$
where  $n_x\coloneqq n_z+n_y+1$:
\begin{subequations}
\label{eq:P2:Chap3:ObsHybrid}
\begin{equation}
\begin{array}{l}
\mathcal{H}_e\left\{\
\begin{array}{rclrcl}
\dot{x}&=&f(x),&\quad &x\in \mathcal{C},\\
x^+&\in&G(x),&&x\in \mathcal{D},
\end{array}
\right.\\
\end{array}
\end{equation}
where
\begin{equation}
\label{eq:P2:Chap3:FlowMap}
f(x)\coloneqq\begin{bmatrix}
\mathsf{F}\begin{bmatrix}
\ep\\
\tilde{\theta}
\end{bmatrix}\\
-1
\end{bmatrix},\qquad\forall x\in \mathcal{C},\end{equation}
\begin{equation}
\label{eq:P2:Chap3:JumpMap}
G(x)\coloneqq\begin{bmatrix}
	\mathsf{G}\begin{bmatrix}
		\ep\\
		\tilde{\theta}
	\end{bmatrix}\\
	[T_1,T_2]
\end{bmatrix},\qquad\forall x\in \mathcal{D},
\end{equation}
and the flow set $\mathcal{C}$ and the jump set $\mathcal{D}$ are defined as follows
\begin{equation}
\label{eq:P2:Chap3:Sets2}
\mathcal{C}\coloneqq\mathbb{R}^{n_z+n_y}\times[0,T_2],\quad\mathcal{D}\coloneqq\mathbb{R}^{n_z+n_y}\times\{0\}.
\end{equation}
\end{subequations}
The set-valued jump map allows to capture all possible sampling events fulfilling \eqref{eq:P2:Chap3:timebound}.
Specifically, the hybrid model in \eqref{eq:P2:Chap3:ObsHybrid} is able to characterize not only the behavior of the analyzed system for a given sequence $\{t_k\}_{k=1}^\infty$, but for any sequence satisfying \eqref{eq:P2:Chap3:timebound}. 
Concerning existence of solutions to system \eqref{eq:P2:Chap3:ObsHybrid}, by relying on the concept of solution proposed in \cite[Definition 2.6]{goebel2012hybrid}, it is straightforward to check that any maximal solution to \eqref{eq:P2:Chap3:ObsHybrid} is complete. Thus, completeness of the maximal solutions to \eqref{eq:P2:Chap3:ObsHybrid} is guaranteed for any choice of the gains $L, H$, and $F$. In addition, we can characterize the domain of these solutions. 
In particular, from the definition of the sets $\mathcal{C}$ and $\mathcal{D}$, it follows that for any  maximal solution $\phi$ to $\mathcal{H}_e$, 
$\dom \phi=\displaystyle\bigcup_{j\in\nats}([t_j,t_{j+1}])\times\{j\},$
with $t_0=0$, $0\leq t_{1}\leq T_2$, and $t_{j+1}-t_j\in[T_1 ,\ T_2]$, for all $j\in\nats_{>0}$.

To solve the considered state estimation problem, our approach is to design gains $L, F$, and $H$ in \eqref{eq:P2:Chap3:ObsHybrid} such that the set wherein the estimation error is zero is globally exponentially stable for \eqref{eq:P2:Chap3:ObsHybrid}. To this end, we consider the following closed set
\begin{equation}
\label{eq:P2:Chap3:A}
\mathcal{A}=\{0\}\times \{0\}\times[0,\ T_2],
\end{equation}
and provide sufficient conditions to ensure that $\mathcal{A}$ is GES for system $\mathcal{H}_e$.
\subsection{Sufficient conditions for exponential stability}
In this section, sufficient conditions for observer design are provided. To this end, let us consider the following assumption whose role will be clarified later via Theorem~\ref{theorem:P2:Chap3:Main}.
\begin{assumption}
\label{Assumption:L2}
There exist two continuously differentiable functions $V_1\colon\mathbb{R}^{n_z+1}\rightarrow \R$, $V_2\colon\R^{n_y+1}\rightarrow \R$, positive real numbers $\alpha_1,\alpha_2,\omega_1,\omega_2$, $\chi_c$, and $\varpi_d\in[0, 1)$ such that
\bigskip
\begin{itemize}
\item[(A1)]$\alpha_1\vert \ep\vert^2\leq V_1(\ep, \tau)\leq \alpha_2\vert \ep\vert^2\qquad \forall x\in \mathcal{C}$;
\bigskip

\item[(A2)]$\omega_1\vert \thetatilde\vert^2\leq V_2(\thetatilde, \tau)\leq \omega_2\vert \thetatilde\vert^2\qquad \forall x\in \mathcal{C}$;
\medskip

\item[(A3)] for each $\ep\in\R^{n_z}, \nu\in[T_1, T_2]$ 
\begin{equation}
\label{eq:A4}
V_1((I-FC)\ep, \nu)\leq (1-\varpi_d) V_1(\ep,0),
\end{equation}

\item[(A4)] for each $x\in \mathcal{C}$, the function $x\mapsto V(x)\coloneqq V_1(\ep, \tau)+V_2(\thetatilde,\tau)$ is such that
\medskip 
\begin{equation}
\label{eq:A3}
\langle\nabla V(x), f(x)\rangle\leq -2\chi_c V(x).
\end{equation}
\hfill $\triangle$
\end{itemize}
\end{assumption}

The following result provides a sufficient condition for global exponential stability of the set $\mathcal{A}$ defined in \eqref{eq:P2:Chap3:A}.
\begin{theorem}\label{thm:GES}
\label{theorem:P2:Chap3:Main}
Let Assumption~\ref{Assumption:L2} hold. Then, the set $\mathcal{A}$ in~\eqref{eq:P2:Chap3:A} is globally exponentially stable (GES) for  $\mathcal{H}_e$.
\end{theorem}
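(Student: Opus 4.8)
The plan is to construct a single Lyapunov function for $\mathcal{H}_e$ out of the ingredients supplied by Assumption~\ref{Assumption:L2} and then verify the standard exponential-decrease conditions along flows and across jumps, leveraging the known structure of $\dom\phi$ (dwell time between $T_1$ and $T_2$). First I would set $V(x)=V_1(\ep,\tau)+V_2(\thetatilde,\tau)$. Combining (A1) and (A2) immediately gives a quadratic sandwich $\underline{c}\,\vert(\ep,\thetatilde)\vert^2\le V(x)\le\overline{c}\,\vert(\ep,\thetatilde)\vert^2$ for all $x\in\mathcal{C}$, with $\underline{c}=\min\{\alpha_1,\omega_1\}$ and $\overline{c}=\max\{\alpha_2,\omega_2\}$; and since $\vert x\vert_{\mathcal{A}}=\vert(\ep,\thetatilde)\vert$ by the definition of $\mathcal{A}$ in \eqref{eq:P2:Chap3:A}, this reads $\underline{c}\,\vert x\vert_{\mathcal{A}}^2\le V(x)\le\overline{c}\,\vert x\vert_{\mathcal{A}}^2$.

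The core of the argument is to bound $V$ along a generic maximal solution $\phi$. On each flow interval $[t_j,t_{j+1}]\times\{j\}$, condition (A4) gives $\frac{d}{dt}V(\phi(t,j))\le-2\chi_c V(\phi(t,j))$, hence $V(\phi(t,j))\le e^{-2\chi_c(t-t_j)}V(\phi(t_j,j))$. At a jump time we have $\tau=0$, the state jumps via $G$, and the timer is reset into $[T_1,T_2]$; writing $\nu=\phi_\tau(t_{j+1},j+1)\in[T_1,T_2]$, the $\thetatilde$-component is reset to $0$ so $V_2$ vanishes after the jump, while (A3) with $\ep=\phi_\ep(t_{j+1},j)$ gives $V_1((I-FC)\ep,\nu)\le(1-\varpi_d)V_1(\ep,0)\le(1-\varpi_d)V(\phi(t_{j+1},j))$. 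Therefore $V(\phi(t_{j+1},j+1))\le(1-\varpi_d)V(\phi(t_{j+1},j))$. Iterating the flow and jump bounds over the hybrid time domain, and using $t_{j+1}-t_j\ge T_1$ (so that after $j$ jumps at least time $jT_1$ — minus a boundary term from the first interval — has elapsed), I would obtain $V(\phi(t,j))\le(1-\varpi_d)^{j}e^{-2\chi_c(t-t_{(j)})}\cdots\le M\,\rho^{t+j}V(\phi(0,0))$ for suitable $\rho\in(0,1)$ and $M\ge1$. Concretely: along a full interval plus the jump at its right end we lose a factor $(1-\varpi_d)e^{-2\chi_c(t_{j+1}-t_j)}\le(1-\varpi_d)$, and since $t_{j+1}-t_j\le T_2$ we can trade the flow decay for a decay in the $j$-count, e.g. $e^{-2\chi_c(t-t_j)}\ge e^{-2\chi_c T_2}\cdot e^{-2\chi_c(t-t_j)/\text{(something)}}$; the clean way is to note $t+j\le t+j$ and choose $\lambda>0$ with $e^{-\lambda(t_{j+1}+j+1)}\ge e^{-\lambda(t_{j+1}+j)}\cdot\ldots$, matching $(1-\varpi_d)e^{-2\chi_c(t_{j+1}-t_j)}\le e^{-2\lambda(t_{j+1}-t_j+1)}$, which holds for $\lambda$ small enough because $\log\frac{1}{1-\varpi_d}>0$ and $t_{j+1}-t_j\in[T_1,T_2]$ is bounded. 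Taking square roots of the resulting bound $V(\phi(t,j))\le M e^{-2\lambda(t+j)}V(\phi(0,0))$ and using the sandwich yields \eqref{eq:GESBound} with $k=\sqrt{M\overline{c}/\underline{c}}$.

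Completeness of every maximal solution is already established in the text preceding the theorem (each maximal solution has $\dom\phi=\bigcup_{j}[t_j,t_{j+1}]\times\{j\}$ with $t_{j+1}-t_j\in[T_1,T_2]$ and is therefore unbounded in $t$), so no extra work is needed there; I would just cite that paragraph.

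The main obstacle is the bookkeeping in the last step: converting the mixed decay — geometric in the jump count $j$ from (A3), exponential in elapsed ordinary time from (A4), with the two linked only through the dwell-time bounds \eqref{eq:P2:Chap3:timebound} — into a single clean bound of the form $e^{-\lambda(t+j)}$. The delicate points are (i) handling the possibly short first interval $[0,t_1]$ with $t_1$ only bounded above by $T_2$ (the jump factor $(1-\varpi_d)$ is available at $t_1$ regardless, and the flow decay over $[0,t_1]$ is $\ge e^{-2\chi_c T_2}$, contributing only to the constant $M$), and (ii) choosing $\lambda$ uniformly over all admissible sampling sequences, which is possible precisely because $T_1>0$ gives a uniform positive lower bound on the per-jump "time budget" and $T_2<\infty$ keeps the flow decay per step bounded below. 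Everything else is routine.
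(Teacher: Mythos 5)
Your proposal is correct and follows essentially the same route as the paper: the composite Lyapunov function $V=V_1+V_2$, the per-jump contraction obtained from (A2)--(A3) (using that $\tilde\theta$ resets to zero and $V_2\geq 0$), the flow decay from (A4), and the sandwich bounds from (A1)--(A2). The only difference is cosmetic: the paper converts the resulting bound $e^{-(\chi_c t+\chi_d j)}$ into the required $e^{-\lambda(t+j)}$ form by citing a lemma from an earlier report, whereas you carry out that dwell-time bookkeeping (including the possibly short first interval, absorbed into the constant) by hand, which is exactly what that lemma does.
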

\begin{proof}
Using items (A2) and (A3) in Assumption~\ref{Assumption:L2}, one has that for all $x\in\mathcal{D}, g\in G(x)$
\begin{equation}
V(g)\leq e^{-2\chi_d} V(x),
\label{eq:DeltaV}
\end{equation}
where
$\chi_d\coloneqq -\frac{1}{2}\ln(1-\varpi_d)\geq 0$. Let $\phi$ be any maximal solution to $\mathcal{H}_e$. Then, by integrating $(t, j)\mapsto (V\circ\phi)(t, j)$ and using 
item (A4) in Assumption~\ref{Assumption:L2} and \eqref{eq:DeltaV}, one has, for all $(t,j)\in\dom\phi$, 
$V(\phi(t,j))\leq e^{-2(\chi_c t+\chi_d j) }V(\phi(0,0)),$
which by using items (A1) and (A2) in Assumption~\ref{Assumption:L2} yields:
\begin{equation}
\vert\phi(t,j)\vert_\mathcal{A}\leq e^{-(\chi_c t+\chi_d j) }\frac{\rho_2}{\rho_1}\vert\phi(0,0)\vert_\mathcal{A},
\quad\forall(t,j)\in\dom\phi,
\label{eq:phi_t_ges}
\end{equation}
where $\rho_1\coloneqq\min\{\alpha_1,\omega_1\}$ and 
$\rho_2\coloneqq\max\{\alpha_2,\omega_2\}$. To conclude, using \cite[Lemma 1]
{Ferrante2018TACReport}, it follows that there exist some solution independent positive real numbers $\varrho$ and  $\lambda$ such that for all $(t, j)\in\dom\phi$, $-\chi_c t\leq \varrho-\lambda (t+j)$. Hence, by using the bound in \eqref{eq:phi_t_ges}, one gets, for all $(t,j)\in\dom\phi$,
$\vert\phi(t,j)\vert_\mathcal{A}\leq e^{-\lambda(t+j)}e^{\varrho}\frac{\rho_2}{\rho_1}\vert\phi(0,0)\vert_\mathcal{A}$.
This concludes the proof.
\end{proof}
\begin{remark}
It is worth to mention that due to $\C$ and $\mathcal{D}$ being closed, $f$ being continuous, and $G$ being outer semicontinuous, hybrid system \eqref{eq:P2:Chap3:ObsHybrid} satisfies the so-called \emph{hybrid conditions} and so it is well posed in the sense of  \cite[Definition 6.29]{goebel2012hybrid}.
Well posedness of \eqref{eq:P2:Chap3:ObsHybrid} ensures that the stability property established in Theorem~\ref{theorem:P2:Chap3:Main} enjoys desirable robustness features that are well characterized in \cite[Ch. 7]{goebel2012hybrid}.
\end{remark}
\subsection{Quadratic conditions}
A possible construction for the functions $V_1$ and $V_2$ in Theorem~\ref{thm:GES} is illustrated in the result given next.
\begin{theorem}
\label{Theorem1bis}
Let $L,H$, and $F$ be given. Assume that there exist $P_1\in\Spnz,P_2\in\Spny$, $\delta>0$, and $\eta>0$ such that the following conditions hold
\begin{align}
\label{eq:Mdbis}
&\mathsf{M}(\mu_i)\prec 0&\forall i\in\{1,2\},
\\
&\begin{bmatrix}
-P_1&P_1-C\tr F\tr P_1\\
\bullet&-e^{\delta T_1}P_1
\end{bmatrix}\preceq 0,
\label{eq:MdDT}
\end{align}
where, for all $\mu\in\R$, $\mathsf{M}(\mu)$ is defined in \eqref{eq:Mmutau} (at the top of the next page)
and $\mu_1\coloneqq\eta$ and $\mu_2\coloneqq(1+\eta)e^{\delta T_2}-1$. Then, functions 
\begin{equation}
\begin{array}{lcl}
     (\varepsilon, \tau)\mapsto V_1(\varepsilon,\tau)&=&e^{-\delta \tau}\varepsilon\tr P_1 \varepsilon,\\
     (\varepsilon, \tilde \theta)\mapsto V_2(\tilde\theta,\tau)& =&(1+\eta -e^{-\delta \tau})\tilde\theta\tr P_2\tilde \theta,
     \end{array}
        \label{eq:V1V2}
\end{equation}
satisfy Assumption~\ref{Assumption:L2} and the set $\mathcal{A}$ in  \eqref{eq:P2:Chap3:A} is GES for  $\mathcal{H}_e$.
\end{theorem}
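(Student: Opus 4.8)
The plan is to show that the quadratic functions $V_1,V_2$ in \eqref{eq:V1V2} satisfy items (A1)--(A4) of Assumption~\ref{Assumption:L2}; the GES conclusion then follows directly from Theorem~\ref{theorem:P2:Chap3:Main}. Both functions are plainly $C^1$, so the work lies entirely in the four inequalities. For (A1) and (A2) I would only use that on $\mathcal{C}$ one has $\tau\in[0,T_2]$, hence $e^{-\delta\tau}\in[e^{-\delta T_2},1]$ and $1+\eta-e^{-\delta\tau}\in[\eta,\,1+\eta-e^{-\delta T_2}]$; since $\eta>0$, both intervals lie in $\R_{>0}$, and multiplying by the extreme eigenvalues of $P_1$, resp.\ $P_2$, yields admissible constants $\alpha_1,\alpha_2$, resp.\ $\omega_1,\omega_2$.

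For (A3) the idea is to read the LMI \eqref{eq:MdDT} through a Schur complement with respect to its $(2,2)$ block $-e^{\delta T_1}P_1\prec0$, which renders it equivalent to $e^{-\delta T_1}(\Id-FC)\tr P_1(\Id-FC)\preceq P_1$. Then, for any $\ep\in\R^{n_z}$ and $\nu\in[T_1,T_2]$, using $e^{-\delta\nu}\le e^{-\delta T_1}$ together with the nonnegativity of the quadratic form, $V_1((\Id-FC)\ep,\nu)=e^{-\delta\nu}\ep\tr(\Id-FC)\tr P_1(\Id-FC)\ep\le e^{-\delta T_1}\ep\tr(\Id-FC)\tr P_1(\Id-FC)\ep\le\ep\tr P_1\ep=V_1(\ep,0)$, so (A3) holds with $\varpi_d=0\in[0,1)$.

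The core of the proof is (A4). I would differentiate $V=V_1+V_2$ along $f$, with $\dot\ep,\dot{\thetatilde}$ given by $\mathsf{F}$ in \eqref{eq:P2:Chap3:FG} and $\dot\tau=-1$. The key device is the scalar $\mu(\tau)\coloneqq(1+\eta)e^{\delta\tau}-1$: it is continuous and strictly increasing, maps $[0,T_2]$ onto $[\mu_1,\mu_2]$ with $\mu_1=\eta$, $\mu_2=(1+\eta)e^{\delta T_2}-1$, and satisfies $e^{-\delta\tau}\mu(\tau)=1+\eta-e^{-\delta\tau}$, so that $V(x)=e^{-\delta\tau}(\ep,\thetatilde)\tr\Diag(P_1,\mu(\tau)P_2)(\ep,\thetatilde)$. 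A direct computation, in which the $\delta$-terms produced by $\partial V/\partial\tau$ land in the diagonal blocks, gives $\langle\nabla V(x),f(x)\rangle=e^{-\delta\tau}(\ep,\thetatilde)\tr\mathsf{M}(\mu(\tau))(\ep,\thetatilde)$ with $\mathsf{M}(\mu)$ as in \eqref{eq:Mmutau}, affine in $\mu$. Since $\mathsf{M}$ is affine and $\mathsf{M}(\mu_1)\prec0$, $\mathsf{M}(\mu_2)\prec0$ by \eqref{eq:Mdbis}, convexity yields $\mathsf{M}(\mu)\prec0$ for all $\mu\in[\mu_1,\mu_2]$; by compactness of this interval and $\Diag(P_1,\mu P_2)\succ0$ on it (recall $\mu\ge\eta>0$), there exists $\chi_c>0$ such that $\mathsf{M}(\mu)\preceq-2\chi_c\,\Diag(P_1,\mu P_2)$ for all $\mu\in[\mu_1,\mu_2]$. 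Taking $\mu=\mu(\tau)$ and multiplying by $e^{-\delta\tau}>0$ gives $\langle\nabla V(x),f(x)\rangle\le-2\chi_c V(x)$, which is (A4).

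With (A1)--(A4) established, Theorem~\ref{theorem:P2:Chap3:Main} applies and $\mathcal{A}$ is GES for $\mathcal{H}_e$. I expect the main obstacle to be (A4): carrying out the Lyapunov-derivative bookkeeping so that the substitution $\mu(\tau)$ exactly cancels the $\tau$-dependence and reproduces $\mathsf{M}(\mu)$, and then extracting a single convergence rate $\chi_c>0$ uniformly over $[\mu_1,\mu_2]$ from the strict LMIs \eqref{eq:Mdbis} by the affinity-plus-compactness argument. The other steps are routine.
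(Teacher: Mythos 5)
Your proposal is correct and follows essentially the same route as the paper: eigenvalue bounds for (A1)--(A2), a Schur complement of \eqref{eq:MdDT} (yielding $e^{-\delta T_1}(\Id-FC)\tr P_1(\Id-FC)\preceq P_1$) for (A3), and the identity $\langle\nabla V(x),f(x)\rangle=e^{-\delta\tau}(\ep,\thetatilde)\tr\mathsf{M}(\mu(\tau))(\ep,\thetatilde)$ together with affinity of $\mathsf{M}$ in $\mu$ and the endpoint LMIs \eqref{eq:Mdbis} for (A4), concluding via Theorem~\ref{thm:GES}. Your only deviation is to make explicit the compactness argument extracting a uniform rate $\chi_c>0$ (and the choice $\varpi_d=0$), details the paper leaves implicit; both are consistent with its proof.
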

\begin{proof}
As a first step, notice that $V_1$ and $V_2$ satisfy items (A1) and (A2) in Assumption~\ref{Assumption:L2} with: $\alpha_2=\lambda_{\max}(P_1)$, $\alpha_1=e^{-\delta T_2}\lambda_{\min}(P_1)$,  $\omega_1=\eta\lambda_{\min}(P_2)$, and $\omega_2=(1-e^{-\delta T_2}+\eta)\lambda_{\max}(P_2)$.
Straightforward calculations show that 
\begin{equation}\label{eq:gradient}
\langle\nabla V(x), f(x)\rangle =e^{-\delta\tau}\begin{bmatrix}\ep\\\thetatilde\end{bmatrix}\tr\mathsf{M}(\mu(\tau))\begin{bmatrix}\ep\\ \thetatilde\end{bmatrix},\quad \forall x\in\mathcal{C}
\end{equation}
where, for all $\tau\in[0, T_2]$, $\mu(\tau)\coloneqq(1+\eta)e^{\delta \tau}-1$ and $\mathsf{M}(\cdot)$ is defined in \eqref{eq:Mmutau} (at the top of the next page).
\begin{figure*}
\smallskip

\begin{equation}
\mathsf{M}(\mu)\coloneqq
\begin{bmatrix}
\He(P_1(A\!-\!LC))\!+\!\delta P_1& \!\!P_1L\!+\!\mu(CA\!-\!CLC\!-\!HC)\tr\!P_2\\
\bullet&\mu\He(P_2(CL+H))\!-\!\delta P_2
\end{bmatrix}.
\label{eq:Mmutau}
\end{equation}
\end{figure*}
Since $\mathsf{M}(\mu(\tau))$ is affine with respect to $\mu(\tau)$, it is also convex with respect to it. In addition, notice that $\range\mu=[\eta,\  (1+\eta)e^{\delta T_2}-1]=\colon [\mu_1, \mu_2]$. Therefore, the following equivalence holds:
$$
\mathsf{M}(\mu(\tau))\prec 0,\ \forall \tau \in [0,\ T_2]\  \Leftrightarrow \ 
\mathsf{M}(\mu)\prec 0,\ \mu\in\{\mu_1, \mu_2\}.
$$
Hence, it follows that \eqref{eq:Mdbis} implies item (A4) in 
Assumption~\ref{Assumption:L2}.
To conclude the proof, it remains to show that inequality \eqref{eq:MdDT} implies the satisfaction of item (A3) in 
Assumption~\ref{Assumption:L2}. To this end, notice that for all $\ep\in\R^{n_z}$, $\nu\in[T_1, T_2]$,
$$
\begin{aligned}
&V_1((I-FC)\ep, \nu)\!-\!V_1(\ep, 0)=\\
&\qquad\qquad\ep\tr\left(e^{-\delta\nu} (\Id-FC)\tr P_1 (\Id-FC)-P_1\right)\ep\leq \ep\tr\mathsf{Q}\ep
\end{aligned}
$$
where $\mathsf{Q}\coloneqq e^{-\delta T_1} (\Id-FC)\tr P_1 (\Id-FC)-P_1$. Hence, if
$\mathsf{Q}\preceq 0$, it follows that item (A3) in 
Assumption~\ref{Assumption:L2} holds with any\footnote{It is straightforward to check that $\alpha_2^{-1}\vert\lambda_{\max}(\mathsf{Q})\vert\in [0, 1]$.} $\varpi_d\in\left[0,\frac{\vert\lambda_{\max}(\mathsf{Q})\vert}{\alpha_2}\right]\cap [0, 1)$.
At this stage notice that by simple congruence transformations and by Schur complement, \eqref{eq:MdDT} is equivalent to $\mathsf{Q}\preceq 0$. Hence, \eqref{eq:MdDT} implies that item (A3) in Assumption~\ref{Assumption:L2} holds. The proof is concluded by application of Theorem~\ref{thm:GES}.
\end{proof}
\section{Observer Design}
\label{sec:ObDes}
\subsection{Guaranteed Cost Observer Design}
The objective of this section is to transform the stability condition of Theorem \ref{Theorem1bis} into constructive ones. This means that the observer gains appears now as additional decision variables. In this situation, the conditions are no longer LMI. However, the use of simple manipulations inspired from \cite{ferrante2018TAC}
allows to alleviate this drawback. In addition another aspect of this section is to illustrate how the proposed architecture lends itself to a guaranteed cost design, this is not the case for \eqref{eq:CT_OBS}.

Let $\phi$ be any solution to $\mathcal{H}_e$, consider the following cost functional \cite{Ferrante2018cost}:
$$
\begin{aligned}
\mathcal{J}(\phi)\coloneqq&\int_{\dom_t \phi}\!\!\! q_c(\phi(s, j(s)))ds
\!+\!\sum^{\sup\dom_j\phi}_{j=1}\!\!q_d(\phi(t(j), j-1)),
\end{aligned}
$$
where for all $x=(\varepsilon, \thetatilde, \tau)\in\C$, $q_c(x)\coloneqq\ep\tr Q_F\ep$ and $q_d(x)\coloneqq\ep\tr Q_J\ep$, 
with $Q_F, Q_J\in\Sspnz$. In particular, for any $\xi\in\C$, we consider the following cost associated to $\mathcal{H}_e$: 
$$
\mathcal{J}^\star(\xi)=\sup_{\phi\in\mathcal{S}_{\mathcal{H}_e}(\xi)}\mathcal{J}(\phi).
$$
The following result is established.
\begin{theorem}
\label{Theorem2optdesign}
Suppose that there exist $P_1\in\Spnz,P_2\in\Spny$, $Y\in\mathbb{R}^{n_z\times n_y}, X\in\mathbb{R}^{n_y\times n_y}$, and $Z\in\mathbb{R}^{n_z\times n_y}$, $\delta>0$, and $\eta>0$ such that the following conditions hold:
\begin{align}
\label{eq:McbisDesignOpt}
&\mathsf{R}_i\prec 0, &\forall i\in\{1,2\},\\
\label{eq:MdbisdesignOpt}
&\begin{bmatrix}
-P_1\!+\!Q_J&P_1-C\tr Z\tr\\
\bullet&-e^{\delta T_1}P_1
\end{bmatrix}\preceq 0,
\end{align}
where $\mathsf{R}_i$ is defined in \eqref{eq:Q_i} (at the top of the next page) with
 $\mu_1\coloneqq\eta$, $\mu_2\coloneqq(1+\eta)e^{\delta T_2}-1$, $\tilde\mu_1\coloneqq1$, and $\tilde\mu_2\coloneqq e^{\delta T_2}$. 
 \begin{figure*}
\begin{equation}
\mathsf{R}_i\coloneqq\begin{bmatrix}
\He(P_1A\!-\!YC)\!+\!\delta P_1\!+\!\tilde \mu_iQ_F& \!\!Y\!+\!\mu_i(P_2CA\!-\!XC)\tr\\
\bullet&\mu_i\He(X)\!-\!\delta P_2
\end{bmatrix}.
\label{eq:Q_i}
\end{equation}
\end{figure*}
Let
\begin{equation}\label{gaindesignOpt}
L=P_1^{-1}Y,\quad H=P_2^{-1}X-CP_1^{-1}Y,\quad F=P_1^{-1}Z,
\end{equation}
Then, the following items hold:
\begin{itemize}
\item[$(i)$] $\mathcal{A}$ in \eqref{eq:P2:Chap3:A} is GES for $\mathcal{H}_e$;
\item[$(ii)$] For any initial condition $\xi=(\xi_{\ep}, \xi_{\thetatilde}, \xi_\tau)\in\mathcal{C}$, the following inequality holds
$$
\mathcal{J}^\star(\xi)\leq e^{-\delta\xi_\tau}\xi_{\ep}\tr P_1\xi_{\ep}+(1+\eta-e^{-\delta\xi_\tau})\xi_{\thetatilde}\tr P_2\xi_{\thetatilde}.
$$ 

\end{itemize}
\end{theorem}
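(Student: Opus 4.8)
The plan is to reduce item $(i)$ to Theorem~\ref{Theorem1bis} via the linearizing change of variables, and then to run a guaranteed–cost Lyapunov argument built on the very same quadratic functions \eqref{eq:V1V2} for item $(ii)$.

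First I would substitute \eqref{gaindesignOpt}, i.e. $Y=P_1L$, $X=P_2(CL+H)$, $Z=P_1F$, into the blocks of $\mathsf{R}_i$ in \eqref{eq:Q_i}. A block-by-block computation (using $\He(P_1A-YC)=\He(P_1(A-LC))$, $(P_2CA-XC)\tr=(CA-CLC-HC)\tr P_2$, $\He(X)=\He(P_2(CL+H))$, and the affine relation $\mu=(1+\eta)\tilde\mu-1$) shows that
$$\mathsf{R}(\mu,\tilde\mu)=\mathsf{M}(\mu)+\tilde\mu\,\Diag(Q_F,0),$$
with $\mathsf{M}$ the matrix in \eqref{eq:Mmutau}. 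Since $Q_F\succeq 0$ and $\tilde\mu_i>0$, condition \eqref{eq:McbisDesignOpt} gives $\mathsf{M}(\mu_i)\preceq\mathsf{R}_i\prec0$ for $i\in\{1,2\}$, which is exactly \eqref{eq:Mdbis}. Similarly $Z=P_1F$ yields $C\tr Z\tr=C\tr F\tr P_1$, so subtracting $\Diag(Q_J,0)\succeq0$ from the left-hand side of \eqref{eq:MdbisdesignOpt} produces \eqref{eq:MdDT}. Thus the hypotheses of Theorem~\ref{Theorem1bis} are satisfied with the functions $V_1,V_2$ of \eqref{eq:V1V2}, which proves $(i)$.

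For item $(ii)$ I would sharpen the two Lyapunov inequalities so as to retain the running costs. Along flows, starting from \eqref{eq:gradient} and using $\tilde\mu(\tau)=e^{\delta\tau}$, hence $e^{-\delta\tau}\tilde\mu(\tau)=1$, convexity of $\mathsf{R}(\cdot,\cdot)$ in $\tilde\mu$ on $[\tilde\mu_1,\tilde\mu_2]$ (with the matching $\mu$-range $[\mu_1,\mu_2]$) gives $\mathsf{M}(\mu(\tau))+e^{\delta\tau}\Diag(Q_F,0)\prec0$ for all $\tau\in[0,T_2]$, so that $\langle\nabla V(x),f(x)\rangle\leq-\ep\tr Q_F\ep=-q_c(x)$ for all $x\in\C$. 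At jumps, for $x\in\mathcal{D}$ (so $\tau=0$) and $g\in G(x)$ one has $g=((\Id-FC)\ep,0,\nu)$ with $\nu\in[T_1,T_2]$, hence $V(g)=e^{-\delta\nu}\ep\tr(\Id-FC)\tr P_1(\Id-FC)\ep$ while $V(x)=\ep\tr P_1\ep+\eta\,\thetatilde\tr P_2\thetatilde\geq\ep\tr P_1\ep$; using $e^{-\delta\nu}\leq e^{-\delta T_1}$ together with the Schur-complement reformulation of \eqref{eq:MdbisdesignOpt} (equivalent to $e^{-\delta T_1}(\Id-FC)\tr P_1(\Id-FC)-P_1+Q_J\preceq0$) gives $V(g)\leq V(x)-\ep\tr Q_J\ep=V(x)-q_d(x)$. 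Then for any $\phi\in\mathcal{S}_{\mathcal{H}_e}(\xi)$, integrating $(t,j)\mapsto V(\phi(t,j))$ along flows and summing the jump decrements yields, for every $(t,j)\in\dom\phi$,
$$\int_0^{t}q_c(\phi(s,j(s)))\,ds+\sum_{k=1}^{j}q_d(\phi(t(k),k-1))\leq V(\xi)-V(\phi(t,j))\leq V(\xi),$$
because $V\geq0$ on $\C$; letting $(t,j)$ exhaust the (unbounded, by completeness) domain and taking the supremum over $\phi$ gives $\mathcal{J}^\star(\xi)\leq V(\xi)=e^{-\delta\xi_\tau}\xi_{\ep}\tr P_1\xi_{\ep}+(1+\eta-e^{-\delta\xi_\tau})\xi_{\thetatilde}\tr P_2\xi_{\thetatilde}$.

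The main difficulty is bookkeeping rather than conceptual: I must verify that the weight $e^{-\delta\tau}$ produced by differentiating $V_1,V_2$ cancels exactly against the factor $\tilde\mu(\tau)=e^{\delta\tau}$ multiplying $Q_F$ in $\mathsf{R}_i$, so that $\langle\nabla V,f\rangle\leq-q_c$ holds \emph{uniformly} in $\tau$; and that the leftover term $\eta\,\thetatilde\tr P_2\thetatilde$ in $V(x)$ at $\tau=0$ cannot spoil the jump inequality (it only helps, being nonnegative, and $V_2$ vanishes right after a jump since $\thetatilde^+=0$). The remaining checks — the block identities of the change of variables, the $\mathsf{R}_i=\mathsf{M}(\mu_i)+\tilde\mu_i\Diag(Q_F,0)$ decomposition, and the Schur-complement equivalence for \eqref{eq:MdbisdesignOpt} — are routine.
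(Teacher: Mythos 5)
Your proposal is correct and follows essentially the same route as the paper: substituting the change of variables \eqref{gaindesignOpt} to verify that \eqref{eq:McbisDesignOpt}--\eqref{eq:MdbisdesignOpt} imply \eqref{eq:Mdbis}--\eqref{eq:MdDT} (so item $(i)$ follows from Theorem~\ref{Theorem1bis}), and then establishing the two dissipation inequalities $\langle\nabla V(x),f(x)\rangle+\ep\tr Q_F\ep\leq 0$ on $\C$ and $V(g)-V(x)+\ep\tr Q_J\ep\leq 0$ on $\mathcal{D}$ for the same Lyapunov function \eqref{eq:V1V2}, including the key cancellation of $e^{-\delta\tau}$ against $\tilde\mu(\tau)=e^{\delta\tau}$. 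The only deviation is that where the paper concludes item $(ii)$ by directly invoking \cite[Corollary 1]{Ferrante2018cost}, you prove the cost bound yourself by integrating along flows and summing over jumps, which is a harmless, self-contained substitute for that citation.
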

\medskip
\begin{proof}
Thanks to the definition of the observer gains in \eqref{gaindesignOpt}, we have
$P_1L=Y,$ $P_2(H+CL)=X$ and $P_1F=Z$. Therefore, due to $Q_F$ and $Q_J$ being positive definite, a few calculations allow to show that \eqref{eq:McbisDesignOpt} and \eqref{eq:MdbisdesignOpt} imply, respectively,  
\eqref{eq:Mdbis} and \eqref{eq:MdDT}. Hence, item $(i)$ follows directly from Theorem~\ref{Theorem1bis}. To conclude, let $V$ be defined as in Assumption~\ref{Assumption:L2} with $V_1$ and $V_2$ as in \eqref{eq:V1V2}. By following analogous steps as in the proof of Theorem~\ref{Theorem1bis}, it can be easily 
shown that the satisfaction of \eqref{eq:McbisDesignOpt} implies for all $x\in\C$,
$\langle\nabla V(x), f(x)\rangle+\ep\tr Q_F\ep\leq 0$.
Similarly, the satisfaction of \eqref{eq:MdbisdesignOpt} can be easily shown to imply for all $x\in\mathcal{D}, g\in G(x)$, $V(g)-V(x)+\ep\tr Q_J \ep\leq 0$.
Thus, since from item $(i)$ maximal solutions to $\mathcal{H}_e$ converge to the set $\mathcal{A}$ in \eqref{eq:P2:Chap3:A}, $V$ is positive definite with respect to $\mathcal{A}$ and continuously differentiable on $\R^{n_x}$, direct application of \cite[Corollary 1]{Ferrante2018cost} yields $(ii)$. Hence, the result is established.  
\end{proof}
\subsection{Optimal Design and Numerical Issues}
As mentioned in the introduction, one of the main objectives of the proposed observer consists of reducing the variation of the plant state estimate across jumps. To achieve this goal, it appears relevant to consider additional constraints throughout the design of the observer gains, and more in particular on the gain $F$. The result stated next provides a possible approach towards this goal.
\begin{proposition}
\label{Theorem2optdesign2}
Consider $P_1\in\Spnz,P_2\in\Spny$, $Y\in\mathbb{R}^{n_z\times n_y}, X\in\mathbb{R}^{n_y\times n_y}$, and $Z\in\mathbb{R}^{n_z\times n_y}$, and positive real numbers $\gamma_1$ and $\gamma_2, \delta$, and $\eta$, such that \eqref{eq:McbisDesignOpt},  \eqref{eq:MdbisdesignOpt} and
\begin{equation}
\label{eq:constrain}
\begin{bmatrix}
P_1&\!\!\!\! Y\\
\bullet &\!\!\! \gamma_1 I_{n_y}
\end{bmatrix}\! \succ 0,\quad \begin{bmatrix}
P_1& Z\\
\bullet &\!\!\! \gamma_2 I_{n_y}
\end{bmatrix}\! \succ 0,
\end{equation}
hold. Then, under the selection of the observer gains given in \eqref{gaindesignOpt} the following items hold:
\begin{itemize}
\item[$(i)$] the set $\mathcal{A}$ defined in \eqref{eq:P2:Chap3:A} is GES for $\mathcal{H}_e$;
\item[$(ii)$] For any initial condition $\xi=(\xi_{\ep}, 0 
, \xi_\tau)\in\mathcal{C}$, inequality
$
\mathcal{J}^\star(\xi)\leq \xi_{\ep}\tr P_1\xi_{\ep}$ holds.
\item [$(iii)$] The norm of the observer gains $F$ and $L$ is constrained.
\end{itemize}
\end{proposition}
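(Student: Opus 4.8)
The plan is to obtain the statement as a direct corollary of Theorem~\ref{Theorem2optdesign}, with only some extra Schur-complement bookkeeping needed to handle the new constraints \eqref{eq:constrain}. For item $(i)$, note that \eqref{eq:McbisDesignOpt} and \eqref{eq:MdbisdesignOpt} are precisely the hypotheses of Theorem~\ref{Theorem2optdesign} and the gains are parametrized exactly as in \eqref{gaindesignOpt}; hence item $(i)$ here is nothing but item $(i)$ of Theorem~\ref{Theorem2optdesign}, the constraints \eqref{eq:constrain} playing no role for stability.

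For item $(ii)$, I would apply item $(ii)$ of Theorem~\ref{Theorem2optdesign} to the initial condition $\xi=(\xi_{\ep},0,\xi_\tau)$. Since the $\thetatilde$-component vanishes, the cost bound collapses to $\mathcal{J}^\star(\xi)\leq e^{-\delta\xi_\tau}\xi_{\ep}\tr P_1\xi_{\ep}$. Because $\xi\in\mathcal{C}$ forces $\xi_\tau\in[0,T_2]$ and $\delta>0$, one has $e^{-\delta\xi_\tau}\leq 1$, and combined with $P_1\succ 0$ this yields $\mathcal{J}^\star(\xi)\leq\xi_{\ep}\tr P_1\xi_{\ep}$. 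The only mild subtlety is recognizing that this $e^{-\delta\xi_\tau}$ factor is bounded by $1$ over $\mathcal{C}$, so that the guaranteed cost can be stated without reference to the timer variable.

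For item $(iii)$, I would apply the Schur complement to each $2\times2$ block inequality in \eqref{eq:constrain}. Since $P_1\succ 0$, the first is equivalent to $\gamma_1 I_{n_y}-Y\tr P_1^{-1}Y\succ 0$; substituting $Y=P_1 L$ from \eqref{gaindesignOpt} gives $L\tr P_1 L\prec\gamma_1 I_{n_y}$, and since $L\tr P_1 L\succeq\lambda_{\min}(P_1)L\tr L$ this furnishes the explicit bound $\norm{L}\leq\sqrt{\gamma_1/\lambda_{\min}(P_1)}$. The identical argument applied to the second inequality with $Z=P_1 F$ gives $F\tr P_1 F\prec\gamma_2 I_{n_y}$ and hence $\norm{F}\leq\sqrt{\gamma_2/\lambda_{\min}(P_1)}$. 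Consequently $\gamma_1$ and $\gamma_2$, being decision variables that can be minimized in the underlying LMI program, quantify how aggressively the norms of $L$ and $F$ are penalized. Overall the argument is essentially bookkeeping, and I do not anticipate any genuine obstacle beyond carefully tracking the substitutions $Y=P_1L$, $Z=P_1F$ and the sign of $e^{-\delta\xi_\tau}-1$.
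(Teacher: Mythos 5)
Your proposal is correct and follows essentially the same route as the paper: items $(i)$ and $(ii)$ are deduced from Theorem~\ref{Theorem2optdesign} (with $\xi_{\thetatilde}=0$ and $e^{-\delta\xi_\tau}\leq 1$), and item $(iii)$ is obtained by Schur complement on \eqref{eq:constrain} together with the substitutions $Y=P_1L$, $Z=P_1F$, yielding $L\tr P_1 L\prec\gamma_1 I_{n_y}$ and $F\tr P_1 F\prec\gamma_2 I_{n_y}$. Your explicit norm bounds $\vert L\vert\leq\sqrt{\gamma_1/\lambda_{\min}(P_1)}$ and $\vert F\vert\leq\sqrt{\gamma_2/\lambda_{\min}(P_1)}$ are a harmless refinement of what the paper states.
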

\begin{proof}
Items $(i)$ and $(ii)$ follow from item $(ii)$ of Theorem \ref{Theorem2optdesign}, whenever $\xi_{\thetatilde}=0$.  Let now $L$, $F$ be selected as in \eqref{gaindesignOpt}. Item $(iii)$ follows from the application of the Schur complement, revealing that both matrix inequalities in \eqref{eq:constrain} are equivalent to $L^{\!\top} \! P_1 L,\!\preceq \! \gamma_1 I_{n_y}$, $F^{\!\top} \!P_1 F\!\preceq\!\gamma_2 I_{n_y}$.
\end{proof}
Proposition~\ref{Theorem2optdesign2} can be embedded into the following optimization problem to perform an optimal design of the observer:
\begin{equation}
\label{eq:ConstrainedOpt}
\begin{aligned}
&\!\underset{P_1,P_2,X,Y,Z,\gamma}{\minimize}&\quad& \Tr(P_1)\!+\!\alpha_1\gamma_1\!+\!\alpha_2\gamma_2\\
&\text{subject to} &      & \eqref{eq:McbisDesignOpt}, \eqref{eq:MdbisdesignOpt},\eqref{eq:constrain}.
\end{aligned}
\end{equation}
In particular, minimizing $\Tr(P_1)+\alpha_1\gamma_1+\alpha_2\gamma_2$ allows to simultaneously bound the observer gains $F$ and $L$ and, in the light of item $(ii)$ in Proposition~\ref{Theorem2optdesign2}, to minimize the cost $\mathcal{J}^\star(\xi)$ (with $\xi_{\tilde \theta}=0$) uniformly with respect to $\xi$. The parameters $\alpha_1$ and $\alpha_2$ are introduced to enable a tradeoff between the constraints on the observers gains $F$ and $L$. Those parameters need to be tuned a priori. The impact of this tuning is discussed in Section~\ref{sec:example}.
\section{Numerical example}
\label{sec:example}
The objective of this section is to showcase the effectiveness of the proposed hybrid observer\footnote{Simulations of hybrid systems are performed in Matlab$^{\tiny{\textregistered}}$ via the \textit{Hybrid Equations (HyEQ) Toolbox} \cite{sanfelice2013toolbox}.}. Consider the following data for \eqref{eq:P2:Chap3:Plant}:
$$
\begin{aligned}
&A=\begin{bmatrix}
0.2 &-1.01\\
 1& 0\\ 
 \end{bmatrix}, C\tr=\begin{bmatrix}0.5 \\-1 \end{bmatrix}, T_1=0.5, T_2=1.1.
\end{aligned} 
$$
TABLE~\ref{tab:gains} gathers the gains obtained by solving optimization problem \eqref{Theorem2optdesign2} with $\delta=0.03$ and $\eta=10^{-4}$ and for several values of $\alpha_1$ and $\alpha_2$. Noticing that the case $\alpha_1=\alpha_2=0$ refers to the situation in which no constraints on the observer gains are imposed, which leads to overly large gains\footnote{\textcolor{blue}{In this example, $Q_F=I$ and $Q_J=0.01 I$. In the published paper, these values have been inadvertently not indicated. }}.

Moreover, to show the benefit of the proposed observer in ensuring convergence speed while limiting the variation of the estimate across jumps, we compare it with the observer \eqref{eq:CT_OBS}. Specifically, we consider a ``large'' (4th row in TABLE~\ref{tab:gains}) and a ``small'' gain (5th row in TABLE~\ref{tab:gains}), both gains are designed via the conditions in \cite{Ferrante2016state}. Indeed, for \eqref{eq:CT_OBS}, the amplitude of the variations of the estimate can be limited by minimizing the norm of the gain $F$. Notice that since observer  \eqref{eq:CT_OBS} runs in open-loop in between measurements, for such a scheme it is not possible to perform a guaranteed cost design as done for observer proposed in this paper via Theorem~\ref{Theorem2optdesign}. This explains why the last column of TABLE~\ref{tab:gains} contains $\emptyset$. 
\begin{table}[t]
\begin{center}
\begin{tabular}{|c|c|c|c|c|}
\hline
Cases & $L$ & $F$ & $H$ &$\Tr(P_1)$\\[2.5mm]
\hline
\!\!\!\!\!$\begin{array}{l}
\mbox{(I): \eqref{eq:ConstrainedOpt}}\\
\alpha_1\!=\!0\\
\alpha_2\!=\!0.
\end{array}$\!\!\!\!\!\!
&\!\!\!\! $\left[\begin{smallmatrix}
10877 \\ 
-98807\\ 
\end{smallmatrix}\right]$\! & \!\!\!
$\left[\begin{smallmatrix}
0.104\\ 
-0.948\\ 
\end{smallmatrix}\right]$ \!\!\!& \!\!\!$-104250 $\!\!\!&\!\!\! $353.8$\\
[2.5mm]
\hline
\!\!\!\!\!$\begin{array}{l}
\mbox{(II): \eqref{eq:constrain}}\\
\alpha_1\!=\!100\\
\alpha_2\!=\!0.1.
\end{array}$\!\!
&\!\!\!\! $\left[\begin{smallmatrix}
3.68\\  -24.47\\ 
\end{smallmatrix}\right]$ & \!\!\!
$\left[\begin{smallmatrix}
0.104\\ 
-0.948\\ 
\end{smallmatrix}\right]$\!\!\! & \!\!\! $-25.93 $& \!\!\! $354.7$\\
[2.5mm]
\hline
\!\!\!\!\!$\begin{array}{l}
\mbox{(III): \eqref{eq:constrain}}\\
\alpha_1\!=\!100\\
\alpha_2\!=\!1.
\end{array}$\!\!
&\!\!\!\! $\left[\begin{smallmatrix}
3.68\\  -24.47\\ 
\end{smallmatrix}\right]$ & \!\!\!
$\left[\begin{smallmatrix}
0.040\\ 
-0.364\\ 
\end{smallmatrix}\right]$\!\!\! & \!\!\! $-11.47 $& \!\!\! $357.1$\\
[2.5mm]
\hline
\!\!\!\!\!$\begin{array}{l}
\mbox{(IV): Hybrid obs. \cite{Ferrante2016state}}\\
L=0,\ H=0
\end{array}$\!\!
&\!\!\!\! $\left[\begin{smallmatrix}
0\\  0\\ 
\end{smallmatrix}\right]$ & \!\!\!
$\left[\begin{smallmatrix}
0.097\\
-0.905\\ 
\end{smallmatrix}\right]$\!\!\! & \!\!\! $0 $& \!\!\! $\emptyset$\\
[2.5mm]
\hline
\!\!\!\!\!$\begin{array}{l}
\mbox{(V): Hybrid obs. \cite{Ferrante2016state}}\\
L=0,\ H=0\\
\end{array}$\!\!
&\!\!\!\! $\left[\begin{smallmatrix}
0\\  0\\ 
\end{smallmatrix}\right]$ & \!\!\!
$\left[\begin{smallmatrix}
0.183\\
-0.333\\
\end{smallmatrix}\right]$\!\!\! & \!\!\! $0 $& \!\!\! $\emptyset$\\
[2.5mm]
\hline
\end{tabular}
\end{center}
\label{default}
\caption{Different selections of the observer gains.}
\label{tab:gains}
\end{table}%
To show the effectiveness of the proposed design, in Fig.~\ref{fig:eps}  we compare the evolution of the estimation error $\varepsilon$ with the observer gains presented in Table~\ref{tab:gains} (Cases III, IV and V) from the initial condition $z(0, 0)=\begin{bsmallmatrix}10\\ 0\end{bsmallmatrix}$, $\hat{z}(0,0)\!=\!\begin{bsmallmatrix}0\\0\end{bsmallmatrix}$, $\theta(0,0)\!=\!Cz(0,0)\!=\!5$ and
$\tau(0,0)\!=\!0$. In these simulations, the value of $\tau$ at jumps is selected as $\tau(t, j+1)=\frac{T_2-T_1}{2}\sin(t)+\frac{T_2+T_1}{2}$.

Fig.~\ref{fig:eps} clearly shows that for the observer \eqref{eq:CT_OBS} limiting the norm of the gain $F$ induces poor convergence time. On the other hand, the proposed observer enables to limit the variation of the estimation error across jumps (this is mostly visible in the evolution of $\ep_2$) while maintaining a fast convergence rate. 
\begin{figure}
\psfrag{t}[1][1][1]{$t$ [sec]}
\psfrag{y}[1][1][1]{$\varepsilon_1$}
\psfrag{y2}[1][1][1]{$\varepsilon_2$}
\includegraphics[trim=1cm 0cm 0cm 0.1cm, clip, width=1\columnwidth]{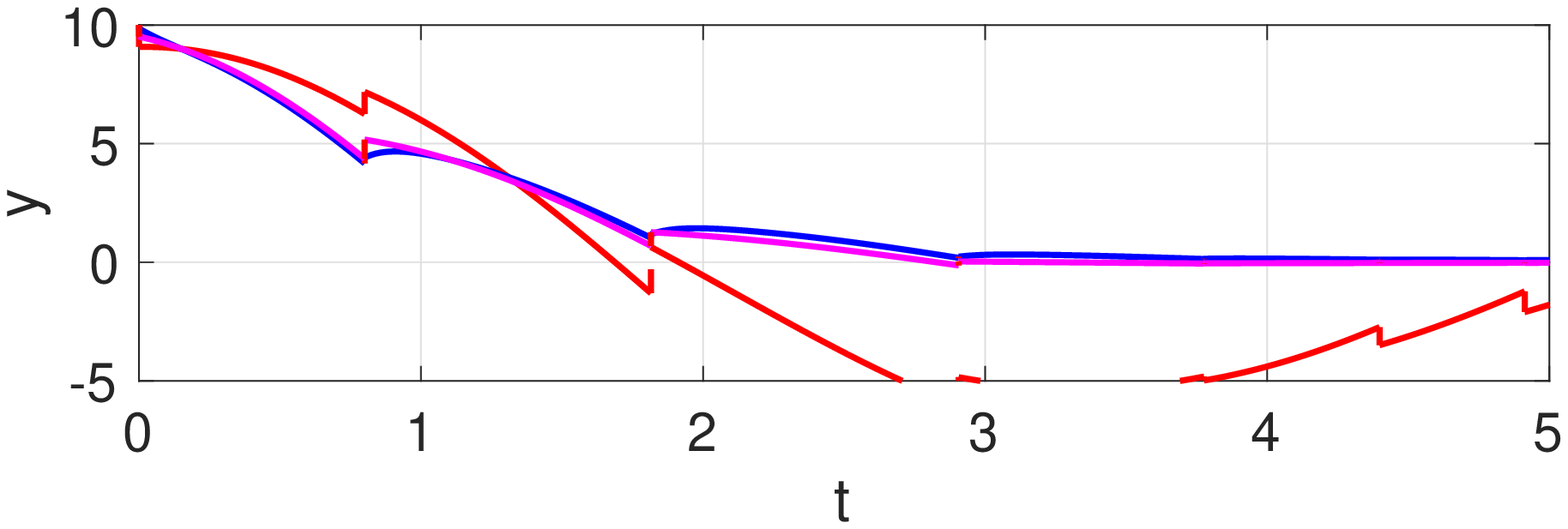}
\includegraphics[trim=1cm 0cm 0cm 0.1cm, clip, width=1\columnwidth]{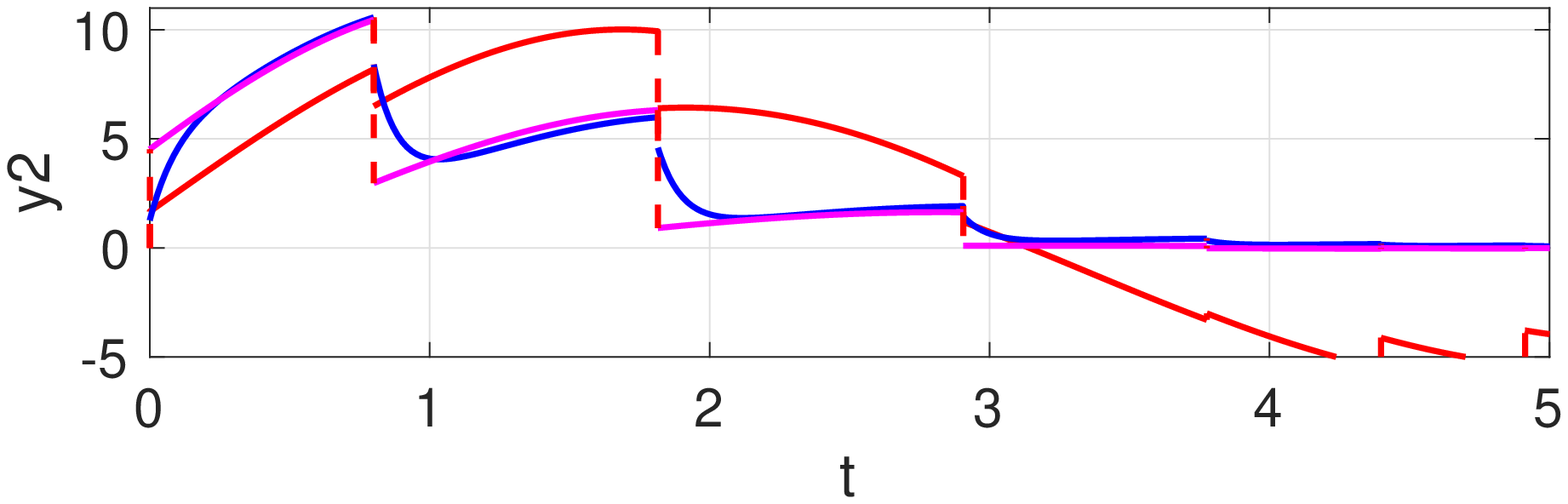}
\caption{Evolution of the estimation error ($\ep_1$ and $\ep_2$) (projected onto ordinary time) for the observers provided in TABLE~\ref{tab:gains}. The blue, purple, and red lines refer to cases (III) to (V), respectively, with the same order.}
\label{fig:eps} 
\end{figure}
\section{Conclusion} 
In this paper, a novel observer design for linear systems subject to aperiodic sampled-data measurements has been presented. The estimation error dynamics are modeled as a hybrid dynamical systems. By employing a Lyapunov approach, sufficient conditions for global exponential stability of a closed set wherein the estimation error is zero are obtained. Guaranteed cost optimal design of the observer gains is presented as the solution of an LMI optimization problem. The potential of this new hybrid observer is illustrated through an academic example.

This paper can be seen as a first step towards the derivation of more general observers for systems subject to aperiodic sampled-data measurements. One of the main features of the proposed architecture consists of combining two types of injections. The use of this additional degree of freedom provides more flexibility in the design of the observer and may potentially lead to better tradeoff between robustness to measurement noise and convergence speed. In addition, with the objective of limiting the variation of the state across jumps, we envision to explore the use of explicit hard bounds on the injection term of the observer.  Another direction consists in relaxing the assumption on the aperiodic samplings by proposing an average dwell-time assumption, which would lead to less conservative conditions. Finally, an interesting direction pertains to the use of less conservative clock-dependent Lyapunov functions for the analysis of the estimation error dynamics. In this setting, an adaptation of the results in \cite{briat2013convex} to the observer design problem seems promising.
\balance
\bibliographystyle{plain}
\bibliography{biblio}
\end{document}